\newcommand{\commentout}[1]{}
\newcommand{\set}[1]{\{  #1 \} }
\newtheorem{theorem}{Theorem}[section]
\newtheorem{claim}[theorem]{Claim}
\newtheorem{defn}[theorem]{Definition}
\newtheorem{fact}[theorem]{Fact}
\newtheorem{remark}[theorem]{Remark}         
\newtheorem{defn}{Definition}
\newtheorem{claim}{Claim}
\newtheorem{fact}{Fact}
\theoremstyle{remark}
\begin{document}

\title{\textbf{The $k$-Dimensional Weisfeiler-Leman Algorithm}}
\author{Neil Immerman}
\address{University of Massachusetts, Amherst, MA, USA}
\email{\href{mailto:immerman@cs.umass.edu}{immerman@cs.umass.edu}}

\author{Rik Sengupta}
\address{University of Massachusetts, Amherst, MA, USA}
\email{\href{mailto:rsengupta@cs.umass.edu}{rsengupta@cs.umass.edu}}
\maketitle

\begin{abstract}
In this note, we provide details of the $k$-dimensional Weisfeiler-Leman Algorithm and its analysis from \cite{canon}. In particular, we present an optimized version of the algorithm that runs in time $O(n^{k + 1}\log n)$, where $k$ is fixed (not varying with $n$).
\end{abstract}

\section{Introduction}

For $k \in \mathbb{N}$, the $k$-dimensional Weisfeiler-Leman algorithm (henceforth referred to 
simply as the \emph{WL-algorithm} or the \emph{$k$-WL algorithm}) takes as input an undirected graph,
colors all $k$-tuples of its vertices, and then iteratively refines the color classes based on a
generalized notion of ``colored neighbors''. We can use this as an isomorphism test, by applying the
$k$-dimensional WL-algorithm to the disjoint union of graphs $G$ and $H$.  Assume for simplicity
that $G$ and $H$ are connected graphs.  If for some $k$, the set of stable colors of $k$-tuples of
vertices from $G$ is disjoint from the set of stable colors of $k$-tuples of
vertices from $H$, then we say that the $k$-WL algorithm \emph{distinguishes} $G$ and
$H$. It is well known that if $k$ is the smallest integer such that the $k$-WL algorithm
distinguishes graphs $G$ and $H$, then $k + 1$ is the smallest number of variables in first-order
logic with counting that distinguishes $G$ and $H$ \cite{canon}. In particular, $\mathcal{C}^{k +
  1}$-equivalence 
corresponds to the $k$-WL algorithm. Furthermore, we know that if two graphs are distinguished by
the $k$-WL algorithm for some $k$, then they are certainly not isomorphic; the converse, while
\emph{often} true \cite{BK}, is not \emph{always} true \cite{CFI}. 

On an $n$-vertex graph $G$, the $k$-WL algorithm terminates with its output a stable coloring $\chi_G^k$ after at most $O(n^{k + 1})$ rounds, where the stable coloring corresponds to the coloring at the first round where the color classes remain unchanged. The color of a $k$-tuple after $m$ iterations of the WL algorithm is exactly the set of properties of that $k$-tuple expressible in $C^{k+1}_m$, i.e., counting logic with $k+1$ variables and quantifier-rank $m$:

\begin{fact} [{\cite{canon}}]For any graph, $G$, all $k,m$ and any two $k$-tuples of vertices of $G$, $\vec{a}, \vec{b}$, the following conditions are equivalent:

\begin{enumerate}
\item $\prescript{}{m}\chi^k_G(a_1, \ldots, a_k) = \prescript{}{m}\chi^k_G(b_1, \ldots, b_k)$
\item $(G, a_1/x_1, \ldots, a_k/x_k) \equiv_{C^{k+1}_m} (G, b_1/x_1, \ldots, b_k/x_k)$.
\end{enumerate}
\end{fact}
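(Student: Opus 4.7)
The strategy is induction on the round count $m$, in lockstep with the quantifier rank.

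For the base case $m=0$, unpack both sides. The initial color $\prescript{}{0}\chi^k_G(\vec{a})$ records exactly the atomic type of $\vec{a}$: which coordinates coincide and which are adjacent. On the logic side, $C^{k+1}_0$ is quantifier-free $(k+1)$-variable logic, whose formulas with free variables in $\{x_1,\ldots,x_k\}$ are Boolean combinations of $x_i = x_j$ and $E(x_i, x_j)$. Two $k$-tuples agree on all such formulas iff they realise the same atomic type, giving (1) $\Longleftrightarrow$ (2) at level $0$.

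For the inductive step, assume the equivalence at level $m$ and unfold the refinement rule as
\[
\prescript{}{m+1}\chi^k_G(\vec{a}) \;=\; \Bigl(\,\prescript{}{m}\chi^k_G(\vec{a}),\ \mu^m(\vec{a})\,\Bigr),
\]
where $\mu^m(\vec{a})$ is the multiset over $v \in V(G)$ of the $k$-vectors $\bigl(\prescript{}{m}\chi^k_G(\vec{a}^{\,j\mapsto v})\bigr)_{j=1}^{k}$ and $\vec{a}^{\,j\mapsto v}$ denotes $\vec{a}$ with $a_j$ replaced by $v$. For (1) $\Rightarrow$ (2), I would put an arbitrary $\varphi \in C^{k+1}_{m+1}$ with free variables in $\{x_1,\ldots,x_k\}$ into negation normal form so that it becomes a positive Boolean combination of $C^{k+1}_m$ formulas and counting quantifiers of the form $\exists^{\geq p} x_j\,\psi$ with $\psi \in C^{k+1}_m$. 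Boolean pieces are controlled by the inductive hypothesis applied to $\prescript{}{m}\chi^k_G(\vec{a}) = \prescript{}{m}\chi^k_G(\vec{b})$; for a counting quantifier, the count $\abs{\{v : \psi(\vec{a}^{\,j\mapsto v})\}}$ is, by pointwise application of the inductive hypothesis, determined by the $j$-th component of $\mu^m(\vec{a})$, and $\mu^m(\vec{a}) = \mu^m(\vec{b})$ by assumption. For (2) $\Rightarrow$ (1), I would argue the contrapositive: if the refined colors differ, either they already differ at round $m$ (and the inductive hypothesis supplies a distinguishing $C^{k+1}_m$ formula) or some coordinate of $\mu^m$ differs. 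In the latter case, the inductive hypothesis yields, for each $m$-round color $c$, a defining formula $\theta_c(\vec{x}) \in C^{k+1}_m$, and $\exists^{\geq p} x_j\,\theta_c(\vec{x})$ (with $c,j,p$ witnessing the multiset discrepancy) is the required $C^{k+1}_{m+1}$ distinguisher.

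The main obstacle is the variable bookkeeping — verifying that the construction stays within $k+1$ variables even though the inner subformulas $\theta_c$ and $\psi$ are themselves arbitrary $C^{k+1}_m$ formulas. The outer counting quantifier $\exists^{\geq p} x_j$ rebinds one of the existing free variables, and the inner formula may freely use $x_{k+1}$ as bound scratch, so the total variable count never exceeds $k+1$; but the argument must be set up so that in passing from the multiset structure of $\mu^m$ to a single formula we never need a $(k+2)$-th variable. Pinning down this correspondence — between the multiset refinement of $k$-WL on one side and the quantifier-plus-scratch-variable structure of $(k+1)$-variable counting logic on the other — is where essentially all the content lies; the rest is routine two-way structural induction.
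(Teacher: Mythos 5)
The paper states this Fact as an imported result from \cite{canon} and gives no proof of it, so there is nothing internal to compare against; judged on its own terms, your sketch has the right skeleton (induction on $m$ in lockstep with quantifier rank, atomic types at level $0$, decomposition by the outermost counting quantifier) but leaves a genuine gap at exactly the point you yourself flag as carrying all the content. Your inductive step only treats counting quantifiers $\exists^{\geq p} x_j$ with $j \leq k$, i.e.\ quantifiers that rebind a variable already interpreted by the tuple. But a $C^{k+1}_{m+1}$ formula with all of $x_1, \ldots, x_k$ free will in general have its outermost quantifier bind the spare variable $x_{k+1}$, and then the inner formula $\psi$ has $k+1$ free variables, so the inductive hypothesis --- which speaks only of $k$-tuples --- does not apply to it. Saying that $x_{k+1}$ is available ``as bound scratch'' inside $\psi$ does not make this case disappear: it is precisely the case that gives $k$-WL the power of $k+1$ variables rather than $k$, and handling it requires a companion claim, proved by simultaneous induction, to the effect that for each fixed $u$ the vector $\bigl(\prescript{}{m}\chi^k_G(\vec{a}[1,u]), \ldots, \prescript{}{m}\chi^k_G(\vec{a}[k,u])\bigr)$ together with $\prescript{}{m}\chi^k_G(\vec{a})$ determines the full $C^{k+1}_m$-type of the $(k+1)$-point configuration $(\vec{a}, u)$. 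This is why the refinement rule records the whole $k$-vector of neighbor colors for each $u$, not $k$ separate quantities.

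The same omission breaks your converse direction. The distinguishing formulas $\exists^{\geq p} x_j\,\theta_c$ only witness discrepancies in the $k$ marginal multisets of $\mu^m$ (the multiset of $j$-th coordinates, for each $j$ separately), whereas the refinement uses the multiset of whole $k$-vectors. Two tuples can satisfy $\mu^m(\vec{a}) \neq \mu^m(\vec{b})$ while all $k$ marginals agree --- for instance $\mu^m(\vec{a})$ containing the vectors $(c, d')$ and $(c', d)$ where $\mu^m(\vec{b})$ contains $(c, d)$ and $(c', d')$ --- and then your scheme produces no distinguishing formula even though condition (1) fails at round $m+1$. The repair again goes through the spare variable: one needs formulas of the shape $\exists^{\geq p} x_{k+1} \bigl(\theta_{c_1}[x_{k+1}/x_1] \wedge \cdots \wedge \theta_{c_k}[x_{k+1}/x_k]\bigr)$, and verifying that these substitutions stay within $k+1$ variables is exactly the bookkeeping you deferred. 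So the proposal is a correct outline of the standard argument, but the step it defers is not routine variable accounting; it is the substance of the theorem.
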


An optimization of the $k$-WL algorithm 
runs in time $O(n^{k + 1}\log n)$, where $k$ is fixed (not varying with $n$). This therefore corresponds to the time
required to check $\mathcal{C}^{k + 1}$-equivalence \cite{canon}.
The purpose of this note is to describe this algorithm and its analysis.

\section{The One-Dimensional Algorithm}

\subsection{Description of the Algorithm}

When $k = 1$, the WL-algorithm is simply known as the \emph{color refinement algorithm}. The input
is an undirected, uncolored graph $G = (V, E)$, and the output is the \emph{coarsest} stable coloring of $G$. We present the algorithm below.

\IncMargin{1em}
\RestyleAlgo{boxruled}
\begin{algorithm}\label{alg:WL1}
\SetAlgoLined
\SetKwInOut{Input}{Input}\SetKwInOut{Output}{Output}
\Input{An uncolored, undirected graph $G = (V, E)$.}
\Output{The \emph{stable} coloring of $G$.}

\vspace*{.1in}

\textbf{Initialization:} $C[v] = 1$, for all $v\in V$; $M = \varnothing$; $L = \set{1}$.  

** All vertices initially colored 1; multiset $M$ empty; work list $L$ initialized with color 1.

\vspace*{.1in}

\While{$L \neq \varnothing$}{
  \For{each vertex color class, $c \in L$}{remove $c$ from $L$.

    \For{each vertex $w$ with $C[w] = c$}{
      \For{each neighbor $v$ of $w$}{add $(v,c)$ to $M$.}
     }
   }

Perform Radix Sort of $M$.

Scan $M$ replacing tuples $(v,c_1), \ldots, (v,c_r)$, with single tuple: $(C(v), c_1, \ldots, c_r,v)$.

Perform Radix Sort of $M$.

Scan $M$: for each color class $c$ that has been split, leave the largest part still colored $c$,
and update the colors of the other parts of $c$; add these new colors to $L$; $M= \varnothing$.
}
Output $G$ with its current coloring. 
\caption{The $1$-dimensional Weisfeiler-Leman Algorithm}
\end{algorithm}

For avoiding clunkiness in explanations, let us state a quick definition.
\begin{defn}
During any fixed round of Algorithm \ref{alg:WL1}, define the \emph{$L$-vertices} to be the set of
all vertices in the color classes currently on the work list $L$; define the \emph{$L$-edges} to be
the set of all edges in $G$ with an $L$-vertex as an endpoint.
\commentout{Finally, say the round (or any subroutine within a round) is \emph{$L$-linear} if it takes time linear in the number of $L$-vertices plus the number of $L$-edges.}
\end{defn}

Note that now it makes sense to talk about the \emph{$L$-neighbors} of a vertex $v \in V$ during a fixed round of Algorithm \ref{alg:WL1}:  this is simply the set of all $w \in N(v)$ such that $w$ is an $L$-vertex.

\begin{remark}
It is important to understand how the work list $L$ is updated during each round of the
algorithm. Each color class present in $G$ after the $i$th round is either preserved or split during
the $(i + 1)$st round. If a color class is preserved, we do not include it in $L$; if it is split,
we let the largest part retain its previous color, and include \emph{all the other parts} in
$L$. For instance, suppose $\{v_1, v_2, v_3, v_4\}$ had color $t$ after the $i$th round, and suppose
the $(i + 1)$st iteration splits them into $\{v_1, v_2\}$, $\{v_3\}$, and $\{v_4\}$. Being the
largest part of the split, the vertices $\{v_1, v_2\}$ retain their old color $t$, and we add the
new colors of $v_3$ and $v_4$ to $L$.  In particular, $L$ keeps track of the \emph{split} color classes, and so $L$ being nonempty after some round corresponds to at least one color class being (strictly) refined during that round.
\end{remark}

\begin{remark}
There are two sorting steps in Algorithm \ref{alg:WL1}, in lines 12 and 14. They have different roles. The sort in line 12 is indexed by the vertices $v$, and so for each $v$ it clumps together the tuples $(v,C[w])$ for all of $v$'s $L$-neighbors $w$. This sort therefore labels a vertex $v$ with its number of $L$-neighbors of each color. Combined with the old color of $v$, this determines the new color of $v$.  

The sort in line 14 is indexed by the \emph{old color classes}, and so for each old color class it clumps together all $L$-neighbors that used to be in this old class. This now enables us to \emph{count} the sizes of the new color classes, in order to
determine which (if any) color classes have been split, so that we may update $L$. Note that radix
sort of a sequence of strings over the alphabet $\set{1, \ldots, n}$ takes time $O(\ell_{\rm{total}} +
n)$, where $\ell_{\rm{total}}$ is the total length of the strings; see Theorem 3.2 of \cite{AHU}.
Let $r$ be the number of $L$-vertices during a given round. Since there are fewer than $n$ edges from
any vertex, the strings being sorted and processed in lines 12-15 have total length
$\ell_{\rm{total}}\leq O(rn)$.  As we will see, it thus follows that the whole round including the
two sorting steps takes time at most $O(rn)$.
\end{remark}

\subsection{Proof of Correctness and Runtime}
Let us see now why this optimized version of the $1$-dimensional WL-algorithm is correct and
efficient.
We prove a running time of $O(n^2\log n)$, which is sufficient for our 
purposes.  A different implementation with the tighter bound  $O((m + 
n)\log n)$ (where $m$ is the number of edges) appears in \cite{BBG}.

\begin{claim}
Algorithm \ref{alg:WL1} terminates with a stable coloring of $G$.
\end{claim}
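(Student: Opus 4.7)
The plan is to handle termination and stability separately. For termination, I would argue that each pass of the \textbf{while}-loop that does not halt the algorithm must introduce at least one new color name into $L$ (otherwise $L$ would be empty at the end of the pass and the loop would exit). Since the partition on $V$ can be strictly refined at most $n - 1$ times, the \textbf{while}-loop runs for at most $n$ passes, and the algorithm terminates.

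For stability, write $C_i$ and $L_i$ for the coloring and work list at the end of iteration $i$, with $C_0 \equiv 1$ and $L_0 = \{1\}$. The heart of the argument is the invariant
\[
(\mathrm{INV}_i)\colon\quad C_i(u) = C_i(v)\ \Longrightarrow\ \{\!\{C_{i-1}(w) : w \in N(u)\}\!\} = \{\!\{C_{i-1}(w) : w \in N(v)\}\!\},
\]
which I would prove by induction on $i \geq 1$. Granting $(\mathrm{INV}_i)$ for every $i$, stability at the terminating round $T$ is immediate: since $L_T = \varnothing$ no splits occurred in the last pass, so $C_T = C_{T-1}$, and $(\mathrm{INV}_T)$ then says exactly that same-$C_T$-color vertices have identical multisets of $C_T$-colored neighbors.

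The base case $i = 1$ is essentially free: the round-1 recoloring uses the $L_0 = \{1\}$-multiset, which just records $\deg(v)$, so $C_1(u) = C_1(v)$ forces $\deg(u) = \deg(v)$, and this coincides with equality of $C_0$-multisets of neighbors (each is a bag of $\deg(v)$ ones). For the inductive step, suppose $(\mathrm{INV}_i)$ and let $u, v$ satisfy $C_{i+1}(u) = C_{i+1}(v)$; by the refinement rule they share a $C_i$-color and an $L_i$-multiset. For each color $c$ appearing in $C_i$, I would show equality of the $c$-counts in $N(u)$ and $N(v)$. If $c \in L_i$ this is immediate from the $L_i$-multiset. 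If $c \notin L_i$, then $c$ is a name inherited from $C_{i-1}$ (either not split in round $i$, or surviving as the largest part of such a split), and the $C_i$-class of $c$ is the $C_{i-1}$-class of $c$ minus any vertices that received new names in $L_i$. Writing the $c$-count at $C_i$ as the $c$-count at $C_{i-1}$ minus the sum of the $C_i$-counts of the descendant colors in $L_i$, the first term is controlled by $(\mathrm{INV}_i)$ and the second by the $L_i$-multiset equality, so equality at $C_i$ follows.

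The main obstacle is the ``surviving-name'' subcase of the inductive step: a color $c \in C_i \setminus L_i$ may represent a strictly smaller vertex set in $C_i$ than it did in $C_{i-1}$, and one has to set up the partition identity that the $C_{i-1}$-class of $c$ decomposes into the surviving $C_i$-class of $c$ together with the descendant classes whose names were added to $L_i$. This bookkeeping is precisely where the $L$-based optimization remains correct: each round uses less information than the naive refinement would, but the history accumulated in the coloring is enough to determine every relevant count once $L$ is empty.
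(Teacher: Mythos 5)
Your proof is correct, and it is considerably more substantive than the paper's, which disposes of the claim in four sentences: termination because every non-final round strictly refines some class, and stability because ``$L$ can only be empty if in the current round no color class splits.'' Your termination argument is the same (with the explicit bound of at most $n$ passes). Where you genuinely diverge is on stability: the paper's one-line justification silently identifies ``no class splits under the $L$-restricted refinement'' with ``the coloring is stable,'' which is exactly the correctness of the work-list optimization --- each round recolors $v$ using only the multiset of $L$-colored neighbors, not the full multiset $\{\!\{C(w) : w \in N(v)\}\!\}$, so the identification needs an argument. Your invariant $(\mathrm{INV}_i)$, proved by induction with the partition identity that an old color's $C_{i-1}$-class decomposes into its surviving $C_i$-class plus the descendant classes placed on $L_i$, is precisely the missing lemma: it shows the accumulated coloring already encodes the counts for the non-$L$ colors, so that a round producing no splits really does certify stability with respect to \emph{all} neighbor colors. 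The paper's approach buys brevity by treating the optimization as self-evidently equivalent to naive refinement; yours buys an actual verification of that equivalence, and the same invariant transfers verbatim to the $k$-dimensional algorithm, where the paper is equally terse.
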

\begin{proof}
The work list $L$ keeps track of all color classes refined during the previous iteration. Every round where the algorithm does not terminate, therefore, corresponds to a strict refinement of some color class. Therefore, the process does indeed terminate eventually.

The condition for termination is that the work list $L$ is empty. But observe that $L$ can only be
empty if in the current round no color class splits. Thus, the output is the desired stable coloring.
\end{proof}

\begin{claim}\label{claim:logn}
The color class of any vertex $v \in V$ can appear in $L$ at most $1 + \log n$ times.
\end{claim}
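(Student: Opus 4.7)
The plan is to follow the color class containing $v$ across all rounds of the algorithm and to show that each time this class appears in $L$ (after the very first time) its size is cut at least in half. Since the initial class has size at most $n$ and every color class has size at least $1$, this halving can happen at most $\log_2 n$ times, giving the claimed $1+\log n$ bound when we also include the initial entry (color $1$, placed in $L$ at initialization).

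The first step will be to set up notation: let $C_0, C_1, C_2, \ldots$ be the sequence of color classes that contain $v$ at the successive moments when ``$v$'s color class is added to $L$'' (so $C_0$ is the initial class $1$, containing all $n$ vertices, corresponding to the initialization). By the update rule recalled in the remark following Algorithm~\ref{alg:WL1}, a class is placed on $L$ in one of exactly two situations: (a) it is the initial color $1$; or (b) it is a non-largest part produced when some class was just split. Hence, for each $i\ge 1$, the class $C_i$ is a strictly smaller part of a split of the class containing $v$ at that moment, which is either $C_{i-1}$ itself or, if $C_{i-1}$ did not split in the intervening rounds, some class equal in size to $C_{i-1}$.

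The key combinatorial observation, which I would state as a short sub-lemma, is that if a class of size $s$ is split into parts $s_1 \ge s_2 \ge \cdots \ge s_p$ with $p\ge 2$, then every non-largest part satisfies $s_i \le s/2$: indeed $s_i \le s_1$ and $s_i + s_1 \le s$ together force $2 s_i \le s$. Applying this to each step gives $|C_i| \le |C_{i-1}|/2$ for all $i\ge 1$, so $|C_i| \le n/2^i$. Since $|C_i|\ge 1$, we must have $i \le \log_2 n$, and therefore the class of $v$ is added to $L$ at most $1+\log_2 n$ times, as required.

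I don't foresee a real obstacle here; the one place to be slightly careful is distinguishing the initial entry of color $1$ (which accounts for the ``$1+$'') from the subsequent entries arising via splits, and making sure the halving argument applies uniformly to each post-initial entry even if several rounds elapse between successive appearances of $v$'s class on $L$ (during those rounds $v$'s class is stable and its size does not change, so the recursion $|C_i|\le |C_{i-1}|/2$ is preserved).
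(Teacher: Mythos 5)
Your proof is correct and takes the same route as the paper: each post-initial appearance of $v$'s class on $L$ corresponds to $v$ landing in a non-largest part of a split, which halves the class size, and the initial placement of color $1$ accounts for the additive $1$. Your explicit sub-lemma ($s_i \le s_1$ and $s_i + s_1 \le s$ force $2s_i \le s$) just spells out the halving step the paper states without elaboration.
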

\begin{proof}
Suppose we are at the end of the $i$th round of the algorithm. For any vertex $v \in V$, its color class will appear in $L$ during the $(i+1)$st round only if $v$'s color class was just split and $v$ is \emph{not} in the largest piece of the ensuing partition.  Thus, each time $v$'s color class appears on $L$, this class is at most half the size it was during the previous round.
\end{proof}

\begin{claim}\label{claim:runtime}
On input $G = (V, E)$ with $|V| = n$, Algorithm \ref{alg:WL1}  runs in time $O(n^2\log n)$.
\end{claim}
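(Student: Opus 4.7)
The plan is to bound the total running time by summing the per-round cost over all iterations of the outer \textbf{while} loop, and then to use Claim \ref{claim:logn} to bound that sum.

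First I would establish the per-round cost. Fix a round and let $r$ denote the number of $L$-vertices in that round. The sorting remark immediately preceding the claim already argues that the total length of the strings fed into the two radix sorts is $O(rn)$, since each $L$-vertex contributes at most $n$ neighbors. Since radix sort on strings over alphabet $\{1,\ldots,n\}$ of total length $\ell$ takes time $O(\ell + n)$, both sorting steps run in $O(rn)$ time. The remaining work in the round (building $M$ by scanning $L$-vertices and their neighbors, and the two linear scans of $M$) is also $O(rn)$. So a round with $r$ many $L$-vertices costs $O(rn)$ total.

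Next I would sum over rounds. Let $r_i$ be the number of $L$-vertices in round $i$. The total time is $O\!\left(n \sum_i r_i\right)$, so it suffices to show $\sum_i r_i = O(n \log n)$. But $\sum_i r_i$ counts the total number of (vertex, round) pairs where the vertex's color class sits on $L$ during that round, i.e., $\sum_{v \in V} (\text{number of rounds in which $v$'s class is on $L$})$. By Claim \ref{claim:logn}, each vertex contributes at most $1 + \log n$ to this sum, so $\sum_i r_i \leq n(1 + \log n) = O(n \log n)$. Multiplying by the $O(n)$ factor from the per-round bound gives the desired $O(n^2 \log n)$.

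I do not anticipate a serious obstacle: the only subtle point is making sure the per-round cost really is $O(rn)$ rather than, say, $O((r+\text{other color classes})\cdot n)$. The key observation is that the algorithm never touches vertices whose color class is not on $L$ in a given round — the outer \textbf{for} loop iterates only over classes $c \in L$, and $M$ only ever contains pairs originating from $L$-vertices. So every operation in the body of the \textbf{while} loop is charged to an $L$-vertex or to one of its incident edges, which justifies the $O(rn)$ per-round bound and closes the argument.
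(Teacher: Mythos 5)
Your overall structure is exactly the paper's: bound each round by $O(r_i n)$ where $r_i$ is the number of $L$-vertices, then use Claim \ref{claim:logn} to conclude that $\sum_i r_i \leq n(1+\log n)$, giving $O(n^2\log n)$. The summation step and the charging of work to $L$-vertices are fine. The one place where you assert rather than argue is the claim that the second ``linear scan of $M$'' (line 15, the recoloring step) costs only $O(r n)$; this is precisely where the paper does its real work. A color class $c$ that gets split may contain vertices that are \emph{not} $L$-neighbors in the current round --- they contribute nothing to $M$, yet they form one piece of the partition of $c$, and the algorithm must decide whether that piece is the largest (in which case those vertices keep color $c$ and must not be visited at all, or the $O(rn)$ bound breaks). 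So your ``key observation'' that every operation is charged to an $L$-vertex or an incident edge is not automatic: a naive implementation of line 15 that enumerates the members of each affected class would cost $O(n)$ per class independently of $r$. The paper closes this by maintaining the class sizes $S[c]$, doubly-linked membership lists $D[c]$, and pointers $P[v]$, so that the largest piece (possibly the non-$L$-neighbors) is never visited and each vertex in a smaller piece is deleted and recolored in $O(1)$ time. Your proof needs either this data structure or an equivalent device to make the per-round bound legitimate; with it added, the argument is complete and identical to the paper's.
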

\begin{proof}
  We show that round $i$ of the main while-loop (lines 3-16) can be implemented to run in time $O(r_in)$, where $r_i$ is the number of $L$-vertices in round $i$.
The $i$th round starts by cycling through all $L$-vertices $w$, and scanning their adjacency lists to
update the multiset $M$ with their neighbors $v$. The size of $M$, therefore, is at most $O(r_in)$.
The total length $\ell_{\rm{total}}$ of the strings sorted in the two radix sorts are at most
$O(r_in)$, so the radix sorts take time at most $O(n + \ell_{\rm{total}})$ $=$ $O(r_in)$.
The scanning and
renaming step in line 13 is also linear in the size of $M$, since all the tuples $(v, c_1), \ldots, (v,
c_r)$ starting with a particular $v$ will appear consecutively after the sort in line 12.

Line 15 describes the process of reassigning the colors.  All the $L$-neighbors that had been color $c$ appear consecutively in this step.  We maintain the size, $S[c]$, of color class $c$ and a doubly-linked list of the elements corresponding to color $c$, $D[c]$.  We also maintain an array of pointers, $P[v]$, to $v$'s entry on its color list,
$D[C[v]]$.

If there are $S[c]$ rows in $M$ starting with $c$, and all of these are identical
-- except for the rows' last coordinates, which are the vertices being colored -- then color class $c$ is unchanged.  Otherwise, it is broken into multiple pieces, of sizes, say, $p_1 \geq p_2 \geq \dots \geq p_r \geq 1$. Note that if not all elements of $c$ were $L$-neighbors, then one of these pieces corresponds to the vertices of color $c$ that were not $L$-neighbors.  If this set of non-neighbors is the largest sub-piece, then we do not have to visit its members.  We simply update $S[c]$ and $D[c]$ by
decrementing $S[c]$ and deleting $v$'s entry in $D[c]$ for each $v$ in any of the smaller pieces.
Since $D[c]$ is a doubly-linked list and we maintain the pointer array $P[v]$ to $v$'s entry, 
this takes time $O(1)$ for each such vertex.
Thus, the time for processing color class $c$ is
$O(p_2 + \cdots + p_r)$.

Using Claim \ref{claim:logn}, each vertex appears as an $L$-vertex at most $1 + \log n$ times.
Since the time it contributes to that round is at most $O(n)$, it follows that the total time for the algorithm is at most $O(n^2\log n)$.
\end{proof}

\subsection{Example Run}
For instance, consider the algorithm run on the following graph.

\begin{center}
\begin{tikzpicture}
 
   \tikzstyle{every node}=[draw,circle,fill=green,minimum size=15pt,
                            inner sep=0pt]
   
    \draw (-1,3.4) node (1) {};
		\draw (1,3.4) node (2) {};
		\draw (2,1.7) node (3) {};
		\draw (1,0) node (4) {};
		\draw (-1,0) node (5) {};
		\draw (-2,1.7) node (6) {};

		\draw (1) -- (2);
		\draw (2) -- (3);
		\draw (3) -- (4);
		\draw (4) -- (5);
		\draw (5) -- (6);
		\draw (6) -- (1);
		\draw (2) -- (6);
		
\end{tikzpicture}
\end{center}
Initially the graph is monochromatic, so we can take the initial color to be $g$ for all vertices (and so, $L$ is initialized to be $\{g\}$). Each vertex has either one or two uncolored neighbors, so in the second iteration, there will only be two new colors, corresponding to the tuples $(g, \{g, g\})$ and $(g, \{g, g, g\})$. Representing them by green and yellow, we obtain the following colored graph after the $1$st iteration (observe that all vertices had to be updated, since each vertex had at least one uncolored neighbor).
\begin{center}
\begin{tikzpicture}
 
    \tikzstyle{every node}=[draw,circle,fill=yellow,minimum size=15pt,
                            inner sep=0pt]
    \draw (1,3.4) node (2) {};
		\draw (-2,1.7) node (6) {};
		
		\tikzstyle{every node}=[draw,circle,fill=green,minimum size=15pt,
                            inner sep=0pt]
    \draw (-1,3.4) node (1) {};
		\draw (2,1.7) node (3) {};
		\draw (1,0) node (4) {};
		\draw (-1,0) node (5) {};

		\draw (1) -- (2);
		\draw (2) -- (3);
		\draw (3) -- (4);
		\draw (4) -- (5);
		\draw (5) -- (6);
		\draw (6) -- (1);
		\draw (2) -- (6);
		
\end{tikzpicture}
\end{center}
Renaming these colors $g$ and $y$, observe that now $L = \{y\}$ (since the largest part of the partition corresponded to the green vertices), and is in particular nonempty. So we keep going. In the next iteration, we can ignore the vertex on the bottom right, since it is not adjacent to any yellow vertex. Updating the other vertices, note that there are three new colors. Denoting the updated new color classes as
\begin{align*}
b &:= (g, \{y, y\}) \\
g &:= (g, \{y\}) \\
y &:= (y, \{y\})\\
p &:= (g), 
\end{align*}
our graph in the next iteration looks as follows.
\begin{center}
\begin{tikzpicture}
 
    \tikzstyle{every node}=[draw,circle,fill=yellow,minimum size=15pt,
                            inner sep=0pt]
    \draw (1,3.4) node (2) {};
		\draw (-2,1.7) node (6) {};
		
		\tikzstyle{every node}=[draw,circle,fill=blue,minimum size=15pt,
                            inner sep=0pt]
    \draw (-1,3.4) node (1) {};
		
		\tikzstyle{every node}=[draw,circle,fill=pink,minimum size=15pt,
                            inner sep=0pt]
    \draw (1,0) node (4) {};
		
		\tikzstyle{every node}=[draw,circle,fill=green,minimum size=15pt,
                            inner sep=0pt]
    \draw (2,1.7) node (3) {};
		\draw (-1,0) node (5) {};

		\draw (1) -- (2);
		\draw (2) -- (3);
		\draw (3) -- (4);
		\draw (4) -- (5);
		\draw (5) -- (6);
		\draw (6) -- (1);
		\draw (2) -- (6);
		
\end{tikzpicture}
\end{center}
Now, consider how to update $L$. The old color classes were green and yellow. The yellow vertices have not been refined, so we do not need to include them in $L$. The green vertices have now been partitioned into three parts, colored green, blue and pink. Of these, the largest one remains green, so we can ignore it, and include the two others, so that now $L = \{b, p\}$, and is still nonempty.

Consider the next update. The only vertices that need to be updated are the yellow ones (for being adjacent to the lone blue vertex), which both keep the same color, and the green ones (for being adjacent to the lone pink vertex), which will both now keep the same color.

Thus, there is no change in this round, so $L = \varnothing$ and the algorithm is complete.

\section{The Higher Dimensional Algorithm}

\subsection{Description of the Algorithm}

When $k \geq 2$, the algorithm and its analysis are essentially the same, with a few added subtleties. Once again, we start with any undirected, uncolored graph $G = (V, E)$. We are now concerned with $k$-tuples, i.e. members of $V^k$. Let's define the \emph{neighbor} of such a $k$-tuple.
\begin{defn}
Let $\vec{x} = (x_1, \ldots, x_k) \in V^k$, $y \in V$, and $1 \leq j \leq k$. Then, let $\vec{x}[j, y] \in V^k$ denote the $k$-tuple obtained from $\vec{x}$ by replacing $x_j$ by $y$. The $k$-tuples $\vec{x}[j, y]$ and $\vec{x}$ are said to be \emph{$j$-neighbors} for any $y \in V$. We also say $\vec{x}[j, y]$ is the \emph{$j$-neighbor of $\vec{x}$ corresponding to $y$}.
\end{defn}

We define the initial coloring of all $k$-tuples to correspond to encodings of their isomorphism \emph{types}. Precisely speaking, we define $\prescript{}{0}\chi_G^k(\vec{v})$ to be the (ordered) isomorphism class of $G[v_1, \ldots, v_k]$; that is, $\prescript{}{0}\chi_G^k(\vec{v}) = \prescript{}{0}\chi_G^k(\vec{w})$ if and only if the map $v_i \mapsto w_i$ is an isomorphism. As before, we maintain the work list $L$ that stores all the color classes \emph{updated} during the previous iteration. We now present the complete algorithm below.

\IncMargin{1em}
\RestyleAlgo{boxruled}
\begin{algorithm}[h]\label{alg:WLk}
\SetAlgoLined
\SetKwInOut{Input}{Input}\SetKwInOut{Output}{Output}
\Input{An uncolored, undirected graph $G = (V, E)$.}
\Output{The \emph{stable} coloring of $V^k$.}

\vspace*{.1in}

\textbf{Initialization:} $C[\vec{v}] = \prescript{}{0}\chi_G^k(\vec{v})$, for all $\vec{v}\in V^k$; $M = \varnothing$; $L = \set{\prescript{}{0}\chi_G^k(\vec{v}) : \vec{v} \in V^k}$.  

** All tuples initially colored with their \emph{isomorphism types}; multiset $M$ empty; work list
$L$ initialized with the set of initial colors of $k$-tuples in $G$.

\vspace*{.1in}

\While{$L \neq \varnothing$}{
  \For{each tuple color class, $c \in L$}{remove $c$ from $L$.

    \For{each tuple $\vec{w}$ with $C[\vec{w}] = c$}{
      \For{each $j \leq k$}{
			\For{each $u \in V$}{
			{let $\vec{v} = \vec{w}[j, u]$\;
			add $(\vec{v}, C[\vec{v}[1, u]], \ldots, C[\vec{v}[k, u]])$ to $M$.}}
     }
   }
	}

Perform Radix Sort of $M$.

Scan $M$ replacing tuples $(\vec{v},c^1_1, c^1_2, \ldots, c^1_k), \ldots, (\vec{v},c^r_1, c^r_2, \ldots, c^r_k)$, with the single tuple: $(C[\vec{v}]; (c^1_1, c^1_2, \ldots, c^1_k), \ldots, (c^r_1, c^r_2, \ldots, c^r_k); \vec{v})$.

Perform Radix Sort of $M$.

Scan $M$ for each color class $c$ that has been split; leave the largest part still colored $c$, and update the colors of the other parts of $c$; add these new colors to $L$.
}
Output the current coloring of all $k$-tuples in $V^k$.
\caption{The $k$-dimensional Weisfeiler-Leman Algorithm}
\end{algorithm}

Once again, we can define an \emph{$L$-tuple} as a $k$-tuple whose color class is in $L$. We can
also talk about an \emph{$L$-neighbor} of a $k$-tuple, $\vec{v}$, which is simply an $L$-tuple that is a
$j$-neighbor of $\vec{v}$ for some $j\leq k$. 

\begin{remark}
It is worth pointing out the similarity between this algorithm and the one-dimensional version, particularly in the two sorting steps, in lines 15 and 17. Once again, the sort in line 15 is indexed by the tuples themselves, and so for each $\vec{v}$ it clumps together the color classes of its $L$-neighbors with multiplicity, with the purpose once again being to determine a canonical, well-defined label for the new color classes of the tuples. The sort in line 17 is indexed by the \emph{old color classes} of the tuples, and so for each old color class it clumps together all tuples that used to be in it. This now enables us to count the sizes of the new color classes to determine which (if any) has been split, in order to update $L$. Again, bounding this time is crucial to the eventual analysis.
\end{remark}

\subsection{Proof of Correctness and Runtime}

The analysis for the higher dimensional version of the algorithm is similar to the one-dimensional one.

\begin{claim}
Algorithm \ref{alg:WLk} terminates with a stable coloring of $V^k$.
\end{claim}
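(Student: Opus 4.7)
The plan is to mirror the argument given for the one-dimensional case, adapting the bookkeeping to $k$-tuples. First I would observe that the work list $L$, by construction, contains exactly the color classes that were refined (split) during the previous iteration: the loop at lines 4--14 only touches tuples whose color currently lies in $L$, and line 18 adds a color to $L$ only when that color class was just split.

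Next I would argue termination. At the end of each iteration of the outer while-loop, either $L$ becomes empty or some color class was strictly refined in that iteration; otherwise no new color would have been added to $L$ in line 18. Since the coloring is a partition of the finite set $V^k$, which has $n^k$ elements, the partition can be strictly refined at most $n^k - 1$ times. Hence, after finitely many rounds $L$ must become empty and the algorithm halts.

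Finally I would argue correctness of the output, namely that when the algorithm halts, the coloring is stable in the sense that no further refinement is possible. This is the step I expect to need the most care, because in the $k$-dimensional setting a color class might fail to appear on $L$ yet still be ``touched'' through the $j$-neighbor relation. The key observation is that if, in the current round, no tuple $\vec{w}$ has its color in $L$, then for \emph{every} tuple $\vec{v}$ the multiset computed in line 10 agrees with the multiset that would have been computed had we processed $\vec{v}$ in the current round: no neighbor's color has changed since the last time $\vec{v}$'s class was updated, because any change would have put some class on $L$. Consequently, no tuple's refined color would differ from its current color, i.e., the coloring is already stable when $L = \varnothing$.

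Together, these two facts -- that the algorithm halts and that halting implies stability -- give the claim. The main obstacle, as noted, is the stability half: one has to verify carefully that the work-list discipline (adding only the non-largest split parts to $L$, exactly as in the $1$-dimensional case) is still sufficient to guarantee that no ``missed'' refinement is lurking when $L$ empties, and this uses the same invariant that drives Claim \ref{claim:logn} in the $1$-dimensional analysis.
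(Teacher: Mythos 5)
Your proposal is correct and follows essentially the same route as the paper, which simply defers to the one-dimensional argument: termination because each non-final round strictly refines the partition of the finite set $V^k$, and stability because $L$ can only empty out in a round where no class splits, which (given the work-list invariant) means no further refinement is possible. Your extra care on the stability half --- checking that a tuple with no $L$-neighbors would receive the same color if reprocessed --- is a worthwhile elaboration of a point the paper leaves implicit, but it is not a different proof.
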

\begin{proof}
This proof is by and large the same as before.
\end{proof}

\begin{claim}\label{claim:logndimk}
The color class of any $k$-tuple $\vec{v} \in V^k$ can appear in $L$ at most $O(k\log n)$ times.
\end{claim}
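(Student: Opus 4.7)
The plan is to repeat the halving argument from Claim~\ref{claim:logn} essentially verbatim, only replacing the universe $V$ of size $n$ by the universe $V^k$ of $k$-tuples, which has size $n^k$. First I would note the analogue of the key observation: whenever $\vec{v}$'s color class is placed on $L$ during some round, it must be because this class was strictly refined in that round and $\vec{v}$ ended up in one of the smaller pieces of the split (since, by the update rule described in the algorithm and Remark~2, the largest piece retains the old color and is never added to $L$).

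Next I would quantify this. If $\vec{v}$'s color class has size $s$ at the start of some round and this class is added to $L$ at the end of that round, then $\vec{v}$ now sits in a piece of size at most $\lfloor s/2 \rfloor$, since $\vec{v}$'s new piece is no larger than the largest piece of the partition of the old class. Thus every appearance of $\vec{v}$'s color class on $L$ is immediately followed by at least a halving of the size of $\vec{v}$'s color class in subsequent rounds.

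Finally I would bound the number of such halvings. The initial color class of $\vec{v}$ (its isomorphism type) has size at most $n^k$, and $\vec{v}$'s color class always has size at least $1$, so $\vec{v}$'s color class can be halved at most $\lfloor \log(n^k)\rfloor = \lfloor k \log n \rfloor$ times. Counting the initial (possible) appearance of $\vec{v}$'s color class on $L$ at the start of the algorithm gives a total of at most $1 + k\log n = O(k \log n)$ appearances, as claimed.

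There is no real obstacle here beyond being careful that the ``largest piece retains the old color'' rule is indeed what forces the halving — but this is exactly how line 18 of Algorithm~\ref{alg:WLk} is specified, so the argument goes through just as in the one-dimensional case, with the only quantitative change being $n \leadsto n^k$ in the logarithm.
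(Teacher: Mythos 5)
Your proof is correct and is essentially identical to the paper's argument: the same observation that each appearance of $\vec{v}$'s color class on $L$ forces its size to at least halve, combined with the bound $|V^k| = n^k$, giving at most $\log(n^k) = k\log n$ halvings. The only (harmless) addition is your explicit accounting for the initial placement of all colors on $L$, which the paper leaves implicit inside the $O(\cdot)$.
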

\begin{proof}
This is also similar to the one-dimensional case. A $k$-tuple $\vec{v} \in V^k$ will have its color appear in $L$ during the $(i+1)$st round of the algorithm only if its color class was just split, and $\vec{v}$ was \emph{not} in the largest piece of the ensuing partition.  So each time $\vec{v}$'s color class appears on $L$, this class is at most half the size it was during the previous round. There are $n^k$ $k$-tuples in all, and so any particular $k$-tuple $\vec{v}$ can have its color class treated at most $\log(n^k)$ times.
\end{proof}

\begin{claim}\label{claim:runtimedimk}
On input $G = (V, E)$ with $|V| = n$, Algorithm \ref{alg:WLk}  runs in time $O(k^2n^{k + 1}\log n)$.
\end{claim}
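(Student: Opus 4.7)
The plan is to mimic the proof of Claim \ref{claim:runtime} almost line-for-line, with $k$-tuples in place of vertices. Specifically, I would first bound the cost of a single pass of the main while-loop (lines 3--18) in terms of $r_i$, the number of $L$-tuples at the start of round $i$, and then sum over rounds using Claim \ref{claim:logndimk}.

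For one round, the nested loops on lines 5--9 insert $O(kn\, r_i)$ entries into $M$, since for each $L$-tuple $\vec{w}$ we iterate over the $k$ coordinates $j$ and the $n$ replacement vertices $u$. Each inserted entry is a $k$-tuple $\vec{v}$ paired with $k$ color labels and therefore has symbol-length $O(k)$, so the total symbol-length of $M$ is $O(k^2 n\, r_i)$. Both radix sorts (lines 15 and 17) and both scan/renaming steps (lines 16 and 18) then run in time linear in this total. To keep the splitting step on line 18 linear in the number of $k$-tuples actually moved across color classes, I would carry over the data structures from the proof of Claim \ref{claim:runtime}: doubly-linked color-class lists $D[c]$ with size counters $S[c]$, a pointer array $P[\vec{v}]$ into $D[C[\vec{v}]]$, and the rule that the largest sub-piece of each split retains the old color.

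To finish, Claim \ref{claim:logndimk} bounds the total number of $L$-tuple appearances across all rounds by $\sum_i r_i \leq n^k \cdot O(k \log n)$. Multiplying by the per-$L$-tuple cost $O(k^2 n)$ yields the claimed $O(k^2 n^{k+1} \log n)$ bound, after absorbing a factor of $k$ into the constant (since $k$ is treated as fixed).

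The one place extra care is needed, relative to the $1$-dimensional case, is the alphabet used by the radix sort. If colors were represented as atomic symbols from $\{1,\ldots,n^k\}$, the $|\Sigma|$ term in the radix-sort bound would add $n^k$ per round and could dominate the analysis. A clean fix is to maintain a canonical dense numbering of the at most $n^k$ colors and encode each such number as a $k$-digit string over $\{1,\ldots,n\}$, keeping the alphabet size $O(n)$ while only multiplying the symbol-length of $M$ by a further $O(k)$ factor, which is already absorbed into the constants above.
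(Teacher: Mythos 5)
Your proof follows the same route as the paper's: bound each round by the number of $L$-tuples times the $O(kn)$ neighbor-enumeration cost, make the sorts and the splitting step linear in the size of $M$ via the $S[c]$, $D[c]$, $P[\vec{v}]$ data structures, and then sum over all rounds using Claim \ref{claim:logndimk}. Your two refinements --- charging $O(k)$ symbols per entry of $M$ (which yields an extra factor of $k$ that you correctly absorb since $k$ is fixed) and encoding colors as $k$-digit strings over $\{1,\ldots,n\}$ so the radix-sort alphabet stays of size $O(n)$ --- are points the paper leaves implicit, and you handle both correctly.
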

\begin{proof}
Consider the main while-loop (lines 3-19). The innermost for-loop (lines 8-11) takes time $O(n)$ to iterate over $u \in V$ and update $M$. The for-loop in lines 7-12 therefore requires time $O(kn)$. This is done for each $L$-tuple in that round, accounting for the for-loop in lines 4-14. The radix sorts on lines 15 and 17, as well as the scanning and updating steps on line 16, as before, are all linear in the size of $M$. Line 18 is also implemented exactly as before, with the aid of the array $S[c]$ of color class sizes and the doubly-linked list $D[c]$ of elements within each color class, together with the pointers $P[\vec{v}]$. The updating process is precisely as before, and so the total processing time is still linear in the size of $M$. But note that $M$ has one entry for each neighbor of an $L$-tuple, and its size, therefore, is also bounded by $O(kn)$ times the number of $L$-tuples. It remains now to verify the number of times a given tuple can appear in $L$, which we know is $O(k\log n)$ from Claim \ref{claim:logndimk}.

There are $n^k$ $k$-tuples in total, and each of them appears as an $L$-tuple (and therefore gets its color class treated) at most $O(k\log n)$ times, with each such treatment taking $O(kn)$ time, so that the total complexity is $O(n^k \cdot k\log n \cdot kn) = O(k^2n^{k + 1}\log n)$, as desired.
\end{proof}

\end{document}